\newtheorem{theorem}{Theorem}
\newtheorem{lemma}[theorem]{Lemma}
\newtheorem{corollary}[theorem]{Corollary}
\newtheorem{observation}[theorem]{Observation}
\newtheorem{definition}[theorem]{Definition}
\newcommand{\qed}{\hspace{\stretch{1}} $\Box$}
\newenvironment{proof*}[1]{\noindent{\bf #1}}{\qed}
\newenvironment{proof}{\noindent {\bf Proof:}}{\qed}
\newcommand{\inline}[1]{{\color{red!70!black} \tt [\,#1\,]}}
\newcommand{\attention}[1]{\begin{quote}\inline{#1}\end{quote}}
\newcommand{\delete}[1]{}
\newcommand{\red}[1]{{\color{red!70!black}{#1}}}
\newcommand{\green}[1]{{\color{green!70!black}{#1}}}
\newcommand{\blue}[1]{{\color{blue!70!black}{#1}}}
\newcommand{\ket}[1]{\left| #1 \right\rangle}
\newcommand{\pg}[1]{\big\lfilet  #1  \big\rfilet}
\newcommand{\lpg}{\big\lfilet}
\newcommand{\rpg}{\big\rfilet}
\newcommand{\Nim}[1]{\textrm{Nim}(#1)}
\newcommand{\A}{Ruleset A}
\newcommand{\B}{Ruleset B}
\newcommand{\C}{Ruleset C}
\newcommand{\Cp}{Ruleset C'}
\newcommand{\D}{Ruleset D}
\newcommand{\NULL}{\textsc{null}}
\newcommand{\oL}{{\mathcal L}}
\newcommand{\oP}{{\mathcal P}}
\newcommand{\oR}{{\mathcal R}}
\newcommand{\oN}{{\mathcal N}}
\title{Toward Quantum Combinatorial Games}
\author{Paul Dorbec \institute{ Univ. Bordeaux, LaBRI, UMR 5800, F-33400 Talence, France \\CNRS, LaBRI, UMR 5800, F-33400 Talence, France}
\and
Mehdi Mhalla \institute{
 Univ. Grenoble Alpes, CNRS, Grenoble INP, LIG, F-38000 Grenoble France}
}
\begin{document}
\maketitle

\begin{abstract}In this paper, we propose a Quantum variation of combinatorial games, generalizing the Quantum Tic-Tac-Toe proposed by Allan Goff [2006]. A combinatorial game is a two-player game with no chance and no hidden information, such as Go or Chess. In this paper, we consider the possibility of playing superpositions of moves in such games. We propose different rulesets depending on when superposed moves should be played, and prove that all these rulesets may lead similar games to different outcomes. We then consider Quantum variations of the game of Nim. We conclude with some discussion on the relative interest of the different rulesets. 
\end{abstract}

\section{Introduction}
Quantum information theory and its interpretations required the introduction of new  concepts such as superposition, entanglement, etc. To shed a new light on these concepts, it seems relevant to present them within the frame of a game. Moreover, introducing these phenomena in combinatorial game theory induces new families of games that had no natural motivation beforehand.

Quantum variations of ``economic games'' (with partial information) were already studied, see \cite{gavoille} for a 
discussion on that matter. 
However, to the best of our knowledge, the only combinatorial game with a tentative quantum variation is Tic-Tac-Toe. It was considered in \cite{gof-06}, ``as a metaphor for the counter intuitive nature of superposition exhibited by quantum systems'' and studied in \cite{LA10,solv2}.
In this paper, we want to propose a general way to provide a Quantum variation of any combinatorial game. 
Moreover, we want to improve the game interpretation of the measurement, in order to make it closer to what happens in quantum information theory.  
The main idea consists in allowing a player to play ``a superposition'' of moves.
Nevertheless, there are various ways to introduce Quantum variations of a combinatorial game in general, we consider some of them and argue on the interest of the different interpretations.

After defining some useful notions from classical combinatorial game theory, we propose a definition of Quantum variations of a game, with different rulesets. We then prove in Section~\ref{s:diff-rulesets} that each pair of rulesets differ for some game. Finally, we study the different rulesets on the game of Nim which is  a classical game for combinatorial game theory.

\section{Reminders on Classical Combinatorial Game Theory}

 A combinatorial game is a game with no chance and no hidden information, opposing two players, \emph{Left} and \emph{Right}. 
 In Combinatorial Game Theory (CGT), a game is described by a \emph{position} (e.g. the setting of the pieces on a board) and a \emph{ruleset}, that describes the legal \emph{moves} for each position and the positions to which these moves lead.
 We here consider short games, that is games with no possible infinite run and a finite number of moves. A classical example of such games is the game of Nim, that we consider often in the following. It is a two player game played on heaps of tokens. Each player alternately take any (positive) number of tokens from a single heap. When a player is unable to move (i.e. no heaps have any token left), he loses.

 To have a synthetic description of a game, a move is generally described in a general setting, independent of the current position (e.g. a bishop moves along a diagonal in Chess). In the following, we need a description of a move that can be interpreted with no ambiguity in different positions of the game. 
 In a formal way, we describe a combinatorial game with a set \(\cal{G}\) of positions and an alphabet \(\Sigma\) of all possible moves. Then a Ruleset is defined as a mapping :
\[\Gamma : \mathcal{G}\times \Sigma \rightarrow \mathcal{G}\cup\{\NULL\}\]
where \NULL\ means that the move is not legal from that position in the game. 

For example, in the game of Nim, we denote \(\Nim{x_1,\ldots,x_k}\) the position with $k$ heaps of $x_1,\ldots,x_k$ tokens. If the heaps contain at most \(n\) tokens, the alphabet of moves can be defined by pairs of integers: \(\Sigma=\{1,\ldots,k\}\times\{-n,\ldots,-1\}\), a pair \((i,-j)\) meaning we try to remove \(j\) tokens from heap number \(i\).
(In Partisan games where players have different moves, the moves \(m\in \Sigma\) are differentiated by their player.)

Under normal convention (that we follow here), any player that is unable to move from a position loses. 
A combinatorial game (opposing two players, \emph{Left} and \emph{Right}) may have four possible \emph{outcomes}: \(\oN\) when the first player can force a win, \(\oP\) when the second player can force a win, \(\oL\) when Left can force a win no matter who moves first and \(\oR\) when Right can force a win no matter who moves first. See \cite{Siegel} for more details.

Another fundamental concept in classical CGT is the sum of two games.
When playing on a sum, each player plays a move in one of the summand, until no moves is legal in any game. With our previous setting, from two rulesets 
\[\Gamma_1 : \mathcal{G}_1\times \Sigma_1 \rightarrow \mathcal{G}_1\cup\{\NULL\}\]
\[\Gamma_2 : \mathcal{G}_2\times \Sigma_2 \rightarrow \mathcal{G}_2\cup\{\NULL\}\]
we define the ruleset $\Gamma_{1+2}$  of the sum (assuming $\Sigma_1 \cap \Sigma_2 = \emptyset$) by
\begin{align*}
\Gamma_{1+2} : \qquad & \mathcal{G}_1\times \mathcal{G}_2\times (\Sigma_1\cup \Sigma_2) \rightarrow \mathcal{G}_1\times \mathcal{G}_2\cup\{\NULL\}\\
& (G_1,G_2,m) \rightarrow  \begin{cases}(\Gamma_1(G_1,m),G_2) \mbox{ if } m \in \Sigma_1 \mbox{ and } \Gamma_1(G_1,m) \neq \NULL\\
(G_1,\Gamma_2(G_2,m)) \mbox{ if } m \in \Sigma_2 \mbox{ and } \Gamma_2(G_2,m) \neq \NULL\\
\NULL \mbox{ otherwise.}\end{cases}
\end{align*}

Recall that two games \(G_1\) and \(G_2\) are said to be equivalent, denoted \(G_1\equiv G_2\), if for any third game \(G_3\),  \(G_1 + G_3\) as the same outcome as \(G_2 + G_3\). 
Observe that if all options of a game $\Gamma_1(G_1,m)$ are equivalent to the corresponding option of $\Gamma_2(G_2,m)$, the two games are equivalent.
The \emph{value} of a game is its equivalence class.
In particular, $*k$ denotes the value of a single Nim heap of $k$ tokens (we use notation with $*0=0$ and $*1=*$). 
In impartial games under normal convention, all games are equivalent to a Nim heap. Then the value of a game can be computed from the values of its options by the \emph{mex}-rule, i.e. the value of an impartial game is $*x$ where $x$ is the smallest non-negative integer such that $*x$ is not among the options.

The \emph{birthday} of a game is the maximum number of moves that can be played on the game. 

\section{Definition of a Quantum game}

 \medskip
 
\begin{figure}[h]
\begin{center}
\includegraphics[width=0.9\textwidth]{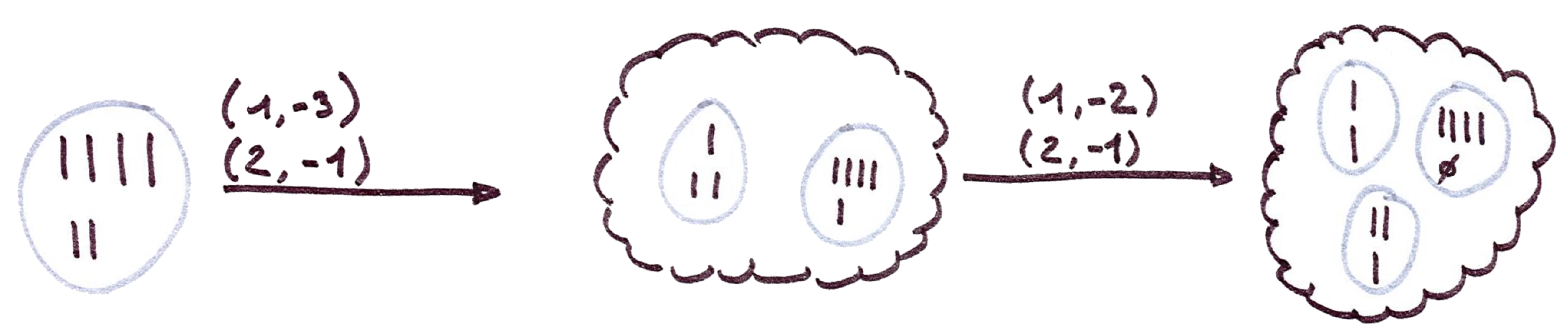}
\caption{Two superposed moves played on a game of Nim.}\label{f:clouds}
\end{center}
\end{figure}

From a classical game, we define a Quantum variation of the game (that can be seen as a modified ruleset) where players have to play Q-moves. A Q-move consists in a superposition (seen as a set) of classical moves. If the Q-move is reduced to a single classical move, we call it an {\em unsuperposed move}. As soon as a superposition of moves is played, the game is said in {\em Quantum state}. 
For each sequence of Q-moves, a {\em run} is a choice of a classical move among each superposed move of the sequence. If the choice forms a legal sequence of moves in the corresponding classical game, then the position obtained after this run is called a {\em realisation} of the game. For example, in Figure~\ref{f:clouds}, two superposed moves are played on a game of Nim, leading to only three realisations, since the sequence (1,-3)--(1,-2) is not legal.

In a Quantum state, a classical move is considered legal if there exists at least one realisation of the game where this move is legal (in the classical sense). A superposition of moves is legal if each of its superposed classical moves are legal (possibly in different realisations).
Observe that applying a move makes disappear all realisations inconsistent with that move.
The game ends when the next player has no legal moves. 
\medskip

 It should be noted here that with this setting, the way the Ruleset is defined influences the behaviour of its Quantum variation. This is why we cannot use simply the set of options of a position to describe a game, as it is in classical CGT. Moves need to be labeled and to have interpretations in different positions. 
 In particular, if we choose to describe the possible moves by the resulting position, then the game would become much simpler, since the positions obtained by any legal run would depend only of the last move.


\subsection{Quantum Rulesets definition}

Depending on when unsuperposed moves are considered legal, we here consider five rulesets.
\begin{description}
\item{\A}: Only superpositions of at least two (classical) moves are allowed.
\item{\B}: Only superpositions of at least two moves are allowed, with a single exception: when the player has only one possible move within all realisations together, he can play it as an unsuperposed move.
\item{\C}: Unsuperposed moves are allowed if and only if they are valid in all possible realisations.
\item{\Cp}: a player may play an unsuperposed moves if and only if it is valid in all possible realisations 
where he still has at least one classical move.
\item{\D}: Unsuperposed moves are always allowed (seen as the superposition of two identical moves).
\end{description}

Note that \Cp\ is the least permissive ruleset that is more permissive than \B\ and \C. We get the following lattice of permissivity:
\begin{center}
\begin{tikzpicture}[thick]
\draw (0,0) coordinate (D) -- ++(0,-1) coordinate (Cp) -- ++(-120:1) coordinate (B) -- ++(-60:1) coordinate (A) -- ++(60:1) coordinate (C) -- (Cp) ;
\draw (A) node[circle, fill=white] {A};
\draw (B) node[circle, fill=white] {B};
\draw (C) node[circle, fill=white] {C};
\draw (Cp) node[circle, fill=white] {C'};
\draw (D) node[circle, fill=white] {D};
\end{tikzpicture}
\end{center}

Remark that in all the above rulesets, the availability of a move depends only on the existence of a realisation, independently on how the realisations are correlated. Therefore, the history may be ignored, implying the following observation: 

\begin{observation}
 Independently of the sequence of moves, if two games are superpositions of the same realisations, the possible sequences of legal moves are identical. (In CGT, such games are said to be equal.)
\end{observation}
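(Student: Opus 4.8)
The plan is to work with a single invariant — the set of realisations of a game in a Quantum state — and to show that everything relevant to future play is a function of that set alone, then to close by a well-founded induction. For a game $G$ (possibly already in a Quantum state) write $\rho(G)$ for the set of classical positions reachable by its legal runs, with $\rho(G)=\{G\}$ before any superposition is played. Two games are ``superpositions of the same realisations'' exactly when $\rho(G_1)=\rho(G_2)$, and the goal is to show that two such games admit exactly the same sequences of legal Q-moves.

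First I would isolate the key fact: for any Q-move $M=\{m_1,\dots,m_r\}$, \textbf{(i)} whether $M$ is legal from $G$ depends only on $\rho(G)$ and the chosen ruleset, and \textbf{(ii)} if $M$ is legal, then $\rho(G\cdot M)=\{\,\Gamma(R,m_i) : R\in\rho(G),\ \Gamma(R,m_i)\neq\NULL\,\}$, which again depends only on $\rho(G)$ and $M$. Part (ii) is immediate from the definition of a run: a run of $G\cdot M$ is a run of $G$ ending at some $R\in\rho(G)$ followed by a choice of some $m_i$ legal from $R$, giving the realisation $\Gamma(R,m_i)$. Part (i) is in essence the remark made just before the statement: legality of a classical move $m$ reads ``$\Gamma(R,m)\neq\NULL$ for some $R\in\rho(G)$'', legality of a superposition is the conjunction over its classical moves, and the ruleset-specific clauses on unsuperposed moves — existence of a \emph{unique} legal classical move (B), legality from every $R\in\rho(G)$ (C), legality from every $R\in\rho(G)$ that still admits some move (C$'$) — are each again phrased purely in terms of $\rho(G)$. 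The writing here amounts to recording, ruleset by ruleset, that no clause ever refers to the correlations between realisations or to the history of moves.

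Then I would run the induction. A convenient well-founded measure is $N(G):=\max\{\,\mathrm{birthday}(R) : R\in\rho(G)\,\}$, the largest classical birthday among the realisations (and $0$ if there are none). Since $\rho(G_1)=\rho(G_2)$ forces $N(G_1)=N(G_2)$, I induct on this common value. If no legal move exists, there is nothing to prove, which settles the base case. Otherwise, for an arbitrary Q-move $M$: by part (i), $M$ is legal from $G_1$ iff from $G_2$; and when legal, part (ii) gives $\rho(G_1\cdot M)=\rho(G_2\cdot M)$, so $G_1\cdot M$ and $G_2\cdot M$ are again superpositions of the same realisations. Moreover each realisation of $G_i\cdot M$ is an option of some realisation of $G_i$ and hence has strictly smaller birthday, so $N(G_i\cdot M)<N(G_i)$. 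The induction hypothesis equates the legal continuations from $G_1\cdot M$ and $G_2\cdot M$, and prepending $M$ completes the step; thus $G_1$ and $G_2$ admit the same sequences of legal moves, i.e.\ are equal.

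I expect the only genuine obstacle to be part (i): one must check that in each of the five rulesets the legality predicate is truly a function of the realisation \emph{set} (together with whose turn it is, which in partisan games is encoded in the move alphabet) — in particular that C$'$'s phrase ``realisations where the player still has a move'' is itself determined by $\rho(G)$, which it is, since admitting a move is a property of a single classical position. Everything else — the run/realisation bookkeeping of part (ii) and the birthday induction — is routine once this point is pinned down.
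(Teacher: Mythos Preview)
Your proposal is correct and is essentially a careful formalization of the paper's own justification: the paper does not give a separate proof of this Observation but derives it directly from the one-line remark preceding it (``the availability of a move depends only on the existence of a realisation, independently on how the realisations are correlated''), which is exactly your part~(i). Your added part~(ii) and the birthday induction make explicit the step from ``legality of one move depends only on $\rho(G)$'' to ``all future sequences of legal moves coincide'', which the paper takes for granted.
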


From this observation, we deduce that a position of a Quantum game is fully described by the possible realisations of the earlier moves, which form a multiset of classical positions. Since the multiplicity of a classical position is not important in all the above rulesets, we describe such a position as a set of superposed classical positions, denoted  \( \pg{G_1,\ldots,G_n} \). For example the superposition of positions \(G\) and \(G'\) is denoted \( \pg{G,G'} \), and the unsuperposed position \(G\) as the set \(\pg{G}\).
When we discuss on a specific Ruleset, e.g. \A, we use the letter as a subscript to the notation: $\pg{G_1,\ldots, G_n}_A$.

Legal classical moves in a superposition \( \pg{G_1,G_2,\ldots,G_n}\) now are any \(m\in \Sigma\) such that at least one of \(\Gamma(G_i, m)\) is not \NULL. In that case, we get the Quantized ruleset defined on $\mathcal{P}(\mathcal{G})\times \mathcal{P}(\Sigma)$ by 
\[\pg{G_1,\ldots,G_n}\times \{m_1,\ldots,m_k\}  \rightarrow \pg{\, \Gamma(G_i,m_j) \mid 1\le i\le n, 1\le j\le k, \Gamma(G_i,m_j) \neq \NULL\,}
\]

\subsection{Sum of Quantum games and Quantum sum of games}

Note that given this setting, a Quantum combinatorial game can be seen as a combinatorial game with a particular ruleset. We can thus sum a Quantum combinatorial game with another Quantum combinatorial game, or with a classical game, following the classical game sum (each player plays a move in one of the summand, until no moves is legal in any game). In such a sum of games, no superposition of moves distributed among the terms are allowed.


However, we can also consider the Quantum variation of the sum of two games, thus allowing superposition of moves distributed among the operands of the sum. Then we get \[\pg{G}+\pg{H} \neq \pg{G+H}\]

Observe that the Nim product $\otimes$, as defined in \cite{ONAG,Siegel}, allows to play a move on one operand or on both. In \D, where allowed moves are superpositions of any number of classical moves, we then obtain  \[\pg{G+H}_D =\pg{G}_D\otimes\pg{H}_D\,.\] 
This is not true for other Rulesets because in \(\pg{G+H}\), you could play only one element of the superposition in $G$ while it might not be a legal move in \(\pg{G}\).

\subsection{To play the game :}

If we want to play a Quantum variation of a game without computing all the realizations of the game at each move, it becomes difficult to know whether a move is legal or not. We suggest that each player at his turn announces the moves he wants to play. If his opponent suspects this move is not legal, he can challenge the player. Then the player must prove his move was legal: he must exhibit a run for each of the classical moves he played. If he manages to do so, then he wins, otherwise he loses. 
Note anyway that this does not change the outcome of the game: if a player can ensure a win in any of these settings, he can ensure a win in the other too.

\subsection{Link with Quantum theory}

In recent works, 
Abramsky and Bandenburger\cite{abrambandenb} gave a mathematical definition of contextuality, that modelize a more general setting than what was obtained with empirical models.
Then Acin et al.\cite{acin-al} showed that these phenomena can be exhibited in a pure combinatorial setting, representing contextuality scenarii with hypergraphs.
Somehow, our proposal of Quantum combinatorial games join in with this theoretical approach of contextuality.


Quantum combinatorial games have the following interpretation in quantum theory : the players act on a quantum system. Each move of a player consists in applying a unitary operator to the system,  
similarly to placing a gate chosen from a fixed set in a quantum circuit (as in the circuit model, see \cite{nielsen}).
If at some point, the move of some player brings the state of the system into a losing subspace, the player loses.

To decide if the state is in the losing subspace, one may apply a projective measurement. This produces a probabilistic game where each player can at best ensure a probability of winning. To recover the combinatorial version, we need to consider winning in a \emph{possibilistic} way : a player loses at the first step for which the projective measurement puts the state in the losing subspace with probability 1. This could be interpreted as allowing the player to replay the sequence of unitaries and the measure until he manages to prove his system is not in the losing subspace.

\smallskip

Note that when a game is in a quantum state, the possible classical states in the superposition can interact with each other and  a superposition of moves can act on different elements. Therefore, like in quantum theory, being in a superposition of positions is different than being in a probability distribution over states. 

As an illustration, it is possible to reproduce with our games the following behavior. 
From the basis states $\ket{0}$ and $\ket{1}$, applying the unitary Hadamard operator $H$ leads to the superpositions $\ket{0}+\ket{1}$ and $\ket{0}-\ket{1}$ respectively (We omit the normalization coefficient). These two states behave exactly like a random bit under measure. However, applying a second time the Hadamard operator $H$, we recover the initial quantum state. 



\sloppy
In the Quantum version of Nim, suppose that a player plays the superposed move $((1,-1),(2,-1))$ on $\Nim{1,1,2}$. It produces the superposition $\lpg \Nim{0,1,2},\Nim{1,0,2}\rpg$. If the second player plays also the same superposed move $((1,-1),(2,-1))$, it 
results in an unsuperposed game $\pg{\Nim{0,0,2}}$ and becomes a deterministic composition of the two moves. This way of obtaining a deterministic move out of two superposed moves reminds much of the previous behavior of the Hadamard operator.
In addition, the move $((1,-1),(2,-1))$ is not a legal move in any of the games  $\pg{\Nim{1,0,2}}_A$ and $\pg{\Nim{0,1,2}}_A$. 
So if one wants to see this definition of a quantum game as a probabilistic game, he needs to consider all the states in the probability distribution to determine the legal moves. Therefore, a probabilistic definition of such a game would be very artificial. Here is an other way of observing this problem: playing either on $\pg{\Nim{1,0,2}}_A$ or on $\pg{\Nim{0,1,2}}_A$, every move produces an element of the superposition that has maximal heapsize one. Thus a property that is true when playing on each of the elements of a superposition may be false when playing on  the superposition.

%
%




\section{Difference of rulesets} \label{s:diff-rulesets}

We now give some examples showing that all the rulesets are non-equivalent for some games.

\paragraph{Example 1}
The first example is based on the octal game $0.6$. This is a game played on a line of pins. Each player at his turn must remove a single pin that is not isolated (i.e. it still has at least one adjacent pin).
We number the pins from 1 to $n$ according to their position on the line. The octal game rule specifies that the move $i$ becomes illegal when both $\{i-1,i+1\}$ have been played.

We consider game \(0.6\) played on a line of four pins.
\begin{itemize}
\item the game has outcome \(\oP\) in the classical game.
\item it has outcome \(\oN\) in all other rulesets. A winning first move is \((1,4)\). Then whatever the answer to this move is, he can win playing $2$, $3$ or $(2,3)$.
\end{itemize}

\paragraph{Example 2}
Our second example is played on the game Domineering. This game is played on a subset of squares from a grid.
Left plays vertical dominoes (two adjacent squares) and right plays horizontal dominoes. A move is described by the squares the played domino occupies.

We play Domineering on the following board:

\begin{center}
\begin{tikzpicture}[scale=0.4]
\draw (3,1) -- (0,1)-- (0,0) -- (3,0) --(3,2) --(2,2) --(2,0);
\draw (1,1) -- (1,0);
\end{tikzpicture}
\end{center}

\begin{itemize}
\item The game has outcome \(\oR\) in the classical game. Indeed, if Left plays first, Right can still play a horizontal domino, while Right can prevent Left from being able to play with his first move. Right can use the same strategy in \D. Right is the only one who can play a superposition of moves, that makes him win in \A. Moreover, playing his winning move in the classical game, he wins in \C\ and \Cp. If Left starts in \C\ or \Cp, he must play an unsuperposed move and Right can answer also with an unsuperposed move.
\item In \B\, Right playing first is forced to play the superposed move, to which Left can answer. Left still loses when playing first.
\end{itemize}

\paragraph{Example 3} 
We consider the game called Hackenbush. We define here a restricted version that suffice to illustrate our example.
The game is played on a set of vertical paths composed of blue and red\footnote{We draw red edges with parallel lines.}  edges. At his turn Left selects a blue edge and remove it together with all the edges above it. Right does the same with a red edge. The game ends when a player has no edge to select. We label each edge of the game, and define a move by the label of the edge selected.

We play Hackenbush on the following game

\begin{center}
\begin{tikzpicture}[scale=0.6]
	\draw[color=red] (-0.025,0) -- (-0.025,1)  (0.975,1) -- (0.975,2) (1.975,1) --(1.975,2) (0.025,0) -- (0.025,1)  (1.025,1) -- (1.025,2) (2.025,1) --(2.025,2);
	\draw[color=blue, very thick] (1,0) -- (1,1) (2,0) -- (2,1);
	\draw (0.3,0.5) node[color=red] {$1$};
	\draw (1.3,1.5) node[color=red] {$2$};
	\draw (2.3,1.5) node[color=red] {$3$};
	\draw (1.3,0.5) node[color=blue] {$a$};
	\draw (2.3,0.5) node[color=blue] {$b$};
\end{tikzpicture}
\end{center}

In that game, in \A, Right cannot play after Left played twice ${\color{blue} (a,b)}$ and this is the only ruleset when Left wins playing first. The sequence ${\color{red} (2,3)}$-${\color{blue} (a)}$-${\color{red} (1,3)}$-${\color{blue} (b)}$ allows an extra move ${\color{red} (1)}$ only in \Cp\ and \D.

\begin{itemize}
\item the game has outcome \(\oP\) in the classical game as well as in \C.
\item it has outcome \(\oL\) in \A.
\item it has outcome \(\oR\) in \B, \Cp\ and \D.
\end{itemize}

\paragraph{Example 4}
We play Nim on two heaps of two tokens, i.e. $\Nim{2,2}$. This is $\oN$ in \D\ but $\oP$ in \Cp.
The key observation is that the  sequence $((1,-1),(2,-1))$ -- $((1,-2),(2,-1))$ -- $(1,-2)$ is winning for the first player in \D\ but the last move is not allowed in \Cp.

With these example, we deduce that all the above rulesets are pairwise different, i.e. for each pair of rulesets, there exists a game whose outcomes differ. Note that it is not necessary to use partisan games to exhibit the difference of the rulesets, but the above games were convenient to keep small examples.

\section{Results}

\begin{definition}
A game $G_1$ in a superposition \(\pg{G_1, G_2,\ldots,G_n}\) is said to be \emph{covered} by \(G_2,\ldots,G_n\) if either $G_1$ has no legal (classical) moves, or for every move $m \in \Sigma$ that is legal on $G_1$, there exist a game \(G_i, 2\le i\le n\) such that $m$ is legal on $G_i$ and \(\Gamma_1(G_1,m)\) is covered by \(\Gamma_2(G_2,m),\ldots, \Gamma_n(G_n,m)\).
\end{definition}

Observe for example that a Nim heap with $k$ tokens is covered by a Nim heap with more than $k$ tokens, which can easily be proved inductively.

\begin{theorem}\label{th:sup-simplification}
In \A, \B\ and \D, if a superposition of games \(\pg{G_1, G_2,\ldots,G_n}\) is such that  
$G_1$ is covered by \(G_2,\ldots,G_n\), then 
 \(\pg{G_1, G_2,\ldots,G_n}\equiv  \pg{G_2,\ldots,G_n}\).
\end{theorem}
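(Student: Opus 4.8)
The plan is to proceed by induction on the maximum of the birthdays of $G_1,\ldots,G_n$, and to show that $\pg{G_1,\ldots,G_n}$ and $\pg{G_2,\ldots,G_n}$ have the same legal Q-moves leading, move by move, to equivalent positions; the conclusion then follows from the observation recalled in the Reminders section, that two games all of whose corresponding options are equivalent are themselves equivalent (equivalently, that the value of a normal-play game depends only on the values of its options). Note that any Q-move from $\pg{G_1,\ldots,G_n}$ reaches a position whose realisations are all of the form $\Gamma(G_i,m_j)$ with $\mathrm{birthday}(\Gamma(G_i,m_j))<\mathrm{birthday}(G_i)$, so the induction parameter strictly decreases at every step and the induction is well founded.

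The first step is to check that $\pg{G_1,\ldots,G_n}$ and $\pg{G_2,\ldots,G_n}$ admit exactly the same legal classical moves: by the definition of \emph{covered}, any $m$ that is legal on $G_1$ is legal on some $G_i$ with $i\ge 2$, so the two positions unlock the same set of classical moves. In rulesets \A, \B\ and \D\ the legality of a Q-move is a function only of that set of legal classical moves (a subset of size at least two for \A; such a subset, or the unique legal classical move when there is exactly one, for \B; any nonempty subset for \D), so the two positions also have the same legal Q-moves. This is precisely the point where \C\ and \Cp\ must be excluded: there the legality of an unsuperposed move depends on its validity in \emph{every} realisation, a condition the extra game $G_1$ can break.

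For the inductive step, fix a legal Q-move $\{m_1,\ldots,m_k\}$. Applying it to the two positions yields superpositions that differ only by the realisations $r_1,\ldots,r_p$ obtained as the non-\NULL\ values $\Gamma(G_1,m_j)$ coming from $G_1$. Since each such $m_j$ is legal on $G_1$, the definition of \emph{covered} (applied to $G_1$ and $m_j$) says that $\Gamma(G_1,m_j)$ is covered by $\Gamma(G_2,m_j),\ldots,\Gamma(G_n,m_j)$ — a sub-family of the realisations appearing in the common part of both resulting positions. Using the (routine, by induction on birthday) monotonicity of \emph{covered} under enlarging the covering family, each $r_t$ is then covered by all the other realisations of the larger position, so removing $r_1,\ldots,r_p$ one at a time is, at each step, an instance of the induction hypothesis on a position of strictly smaller birthday. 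Hence the two resulting positions are equivalent, so every option (for either player) of $\pg{G_1,\ldots,G_n}$ is equivalent to the corresponding option of $\pg{G_2,\ldots,G_n}$ and conversely; the equivalence of the two games follows. The degenerate case where no Q-move is legal — in particular the base case, where all $G_i$ are terminal — is covered too: both sides are then terminal and of value $0$.

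I expect the main obstacle to be bookkeeping rather than a genuine difficulty: one must state and prove the monotonicity of the \emph{covered} relation, verify that it propagates correctly through the one-at-a-time removal so that every intermediate superposition still satisfies the hypothesis of the theorem, and keep careful track of the ruleset-specific legality conditions — notably the single-move exception of \B\ — so that one can be sure the two positions really do move to matching options.
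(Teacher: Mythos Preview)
Your proposal is correct and follows essentially the same route as the paper: the paper also inducts on birthday (of $G_1$ alone rather than the maximum), notes that the base case where $G_1$ has no moves is clear in \A, \B\ and \D\ but not in \C\ or \Cp, and in the inductive step observes that each $\Gamma(G_1,m_j)$ is covered by $\Gamma(G_2,m_j),\ldots,\Gamma(G_n,m_j)$ and may therefore be removed. You are more explicit than the paper about why both positions have the same legal Q-moves and about the monotonicity/one-at-a-time removal needed to discard several $\Gamma(G_1,m_j)$'s, which the paper glosses over, but the argument is the same.
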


\begin{proof}
We prove the result by induction on the birthday of $G_1$. First observe that if $G_1$ has no possible moves, then the equivalence is clear in \A, \B\ and \D\ (but not in \C\ and \Cp).
Now suppose the result is true for every game with birthday less than $G_1$'s birthday. Consider a  move \((m_1,\ldots,m_k)_{k\ge1}\). Then the game obtained is \(\pg{\Gamma(G_i,m_j),1\le i\le n, 1\le j\le k, \Gamma(G_i,m_j)\neq \NULL}\). For each $1\le j \le k$ such that \(\Gamma(G_1,m_j)\neq \NULL\), \(\Gamma(G_1,m_j)\) is covered by \(\Gamma_2(G_2,m_j),\ldots, \Gamma_n(G_n,m_j)\), so it can be removed from the superposition. Thus, every option of \(\pg{G_1, G_2,\ldots,G_n}\) is equivalent to the corresponding option of \(\pg{ G_2,\ldots,G_n}\) and we have equivalence.
\end{proof}

From our above observation on the Nim heaps, we get:

\begin{corollary}
In Rulesets A,B,D, the value of a superposition of a single Nim heap in different states depends only on the largest heap in the superposition.
\end{corollary}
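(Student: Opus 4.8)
The plan is to obtain the corollary directly from Theorem~\ref{th:sup-simplification}, together with the observation recorded just before that theorem that a Nim heap with $k$ tokens is covered by any Nim heap with strictly more than $k$ tokens. The only work left is a short iteration that peels off the non-maximal heaps.

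Concretely, I would start from a superposition $\pg{\Nim{a_1},\ldots,\Nim{a_n}}$ of single Nim heaps and set $m=\max_{1\le i\le n}a_i$. Since a quantum position is a \emph{set} of classical positions (multiplicities being irrelevant in \A, \B\ and \D), I may assume the $a_i$ are pairwise distinct, so exactly one heap, say $\Nim{a_1}$ after reindexing, has $a_1=m$. For each $i\ge2$ we have $a_i<m$, so $\Nim{a_i}$ is covered by the single game $\Nim{m}=\Nim{a_1}$; a short induction on birthdays shows that a game covered by some games remains covered after further games are added to the superposition, so $\Nim{a_i}$ is covered by $\{\Nim{a_l}\mid l\ne i\}$ as well. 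By Theorem~\ref{th:sup-simplification} (valid in \A, \B\ and \D) this heap can be deleted without changing the equivalence class. Iterating, and noting that $\Nim{m}$ is never the deleted heap, so the remaining small heaps stay covered by it throughout, I reach $\pg{\Nim{a_1},\ldots,\Nim{a_n}}\equiv\pg{\Nim{m}}$. Hence any two superpositions of single Nim heaps with the same maximum $m$ are both equivalent to $\pg{\Nim{m}}$, and so share a value.

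The argument is essentially routine once Theorem~\ref{th:sup-simplification} is in hand; the only subtlety requiring attention is that the theorem removes covered games one at a time, so one must check that the covering heap $\Nim{m}$ survives every deletion (it does, being the unique maximal heap) rather than trying to discard all the small heaps simultaneously. It is also worth stating explicitly that the corollary does not claim $\pg{\Nim{m}}$ has the classical value $*m$: in \A, for instance, a lone heap still admits superposed moves, so the statement asserts only that the value depends on $m$, not what that value is.
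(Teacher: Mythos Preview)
Your proposal is correct and follows exactly the route the paper intends: the corollary is stated immediately after Theorem~\ref{th:sup-simplification} with only the sentence ``From our above observation on the Nim heaps, we get'' as justification, so the paper treats the iterative removal of the smaller heaps (each covered by the maximal one) as obvious. Your write-up simply makes explicit the one-at-a-time peeling and the monotonicity of ``covered'' under enlarging the covering set, which the paper leaves implicit.
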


\begin{corollary}\label{c:superposed-two}
In  \A, \B\ and \D, the Nim game on at most $k$ heaps is equivalent to the corresponding game where only superposition of at most $k$ moves are allowed.
\end{corollary}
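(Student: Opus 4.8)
The plan is to show that, when playing on a superposition of Nim positions each having at most $k$ heaps, every legal Q-move consisting of more than $k$ classical moves leads to a position equivalent (for $\equiv$) to the one reached by a legal Q-move of at most $k$ classical moves. Granting this, the corollary follows by induction on the birthday: at each reachable position the options of the full game and those of the game restricted to $\le k$-component Q-moves have the same set of values, so --- since Quantum Nim is impartial and short --- the mex-rule forces the two rulesets to assign the same Nim-value at every position, in particular at $\Nim{x_1,\ldots,x_k}$, which means the two games are equivalent.

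First I would record a mild strengthening of the observation preceding Theorem~\ref{th:sup-simplification}: if two Nim positions $H$ and $H'$ agree on every heap except one, on which $H$ has strictly fewer tokens than $H'$, then $H$ is covered by $H'$. I expect this to go through by the same induction on the birthday of $H$ as the single-heap case, since any move legal on $H$ either removes tokens from a common heap or removes some number $j$ of tokens from the distinguished heap with $j$ at most its size in $H$ (hence also a legal removal on $H'$), and in both cases the two resulting positions again differ on a single heap where the first has fewer tokens. I would also use the routine fact that being covered inside a sub-superposition implies being covered inside the whole superposition.

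Next, I would fix one of \A, \B, \D\ (assuming $k\ge2$ in the case of \A; the small cases are handled directly) and consider a superposition $\pg{G_1,\ldots,G_n}$ of Nim positions with at most $k$ heaps together with a legal Q-move $M$ of $k'>k$ classical moves. Since each move of $M$ acts on a single heap, pigeonhole yields two moves $m_a,m_b\in M$ acting on the same heap, with $m_a$ removing no more tokens than $m_b$. On every $G_i$ on which $m_b$ is legal, $m_a$ is legal too, and $\Gamma(G_i,m_b)$, $\Gamma(G_i,m_a)$ differ only on that heap, where the former has fewer tokens; by the strengthened observation $\Gamma(G_i,m_b)$ is covered by $\Gamma(G_i,m_a)$, hence covered inside the superposition obtained by playing $M$. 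Deleting these covered positions one at a time via Theorem~\ref{th:sup-simplification} --- the witness $\Gamma(G_i,m_a)$ surviving each deletion --- shows that playing $M$ and playing $M\setminus\{m_b\}$ give equivalent positions, and $M\setminus\{m_b\}$ is still legal in the chosen ruleset (immediate for \D\ and \B, and for \A\ it still has $k'-1\ge k\ge2$ components). Iterating this reduction until no two remaining moves share a heap leaves at most $k$ moves, which proves the claim.

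The step I expect to be the main obstacle is precisely this reduction: stating the ``covered'' relation carefully enough that it is preserved through the repeated applications of Theorem~\ref{th:sup-simplification}, and verifying that the shrunk Q-move is still legal in the most restrictive ruleset \A\ --- which is exactly where the bound on the number of heaps is used. The strengthened covering observation and the pigeonhole argument themselves should be routine.
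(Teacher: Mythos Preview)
Your proof is correct and takes essentially the same approach as the paper's: both use Theorem~\ref{th:sup-simplification} to discard, among classical moves acting on the same heap, all but the one removing the fewest tokens, leaving at most one move per heap. The paper does this in a single step by replacing each heap's set $I_h$ of removal amounts by $\min(I_h)$ (and, for \A, padding with the extra move $(h,-\min(I_h)-1)$ when only one heap is touched), whereas you iterate via pigeonhole one redundant move at a time; the content is identical, and your explicit covering observation for Nim positions differing on a single heap is exactly what the paper is invoking implicitly.
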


\begin{proof}
We prove the result by induction on the birthday of the game: Consider 
a superposition of Nim heaps on at most $k$ heaps. Consider a superposed move from this game, say $((h,-i_h)_{i_h\in I_h,1\le h\le k}$ (i.e. the superposed removal of all numbers of token in $I_h$ for each heap $h$).
Theorem~\ref{th:sup-simplification} shows thatthis move is equivalent to the move $((h,-\min(I_h))_{1\le h\le k})$, which is a superposition of at most $k$ moves. In the case when only one $h$ is such that $I_h$ is non empty (and unsuperposed moves are not legal), we also get that it is equivalent to the move  
$((h,-\min(I_h),(h,-\min(I_h)-1))$.
\end{proof}

We remark that this corollary is not true in the case when the number of heaps is larger than $k$. For example, let $A_{|2}$ denote \A\ where only superpositions of two moves are allowed. 
We have that all the options of $\pg{\Nim{1,1,1}}_{A_{|2}}$ are equivalent to $\pg{\Nim{1,0,1},\Nim{0,1,1}}_{A_{|2}}$ (by permutation of the heaps). This position has two options, namely $\pg{\Nim{0,0,1}}_{A_{|2}}$ (by the move $(1,-1),(2,-1))$) which has value 0
and $\pg{\Nim{1,0,0},\Nim{0,1,0}, \Nim{0,0,1}}_{A_{|2}}$ (by any other move) which has value $*$. Thus $\pg{\Nim{0,1,1},\Nim{1,0,1}}_{A_{|2}}\equiv *2$ and $\pg{\Nim{1,1,1}}_{A_{|2}}\equiv 0$.

However playing the only possible superposition of three moves on $\pg{\Nim{1,1,1}}_A$ leads to the position $\pg{\Nim{0,1,1},\Nim{1,0,1},\Nim{1,1,0}}_{A}$. All options of this position are equivalent to 
$\pg{\Nim{0,0,1},\Nim{0,1,0},\Nim{1,0,0}}_{A}$ which has value $*$. As a consequence, we obtain that 
$\pg{\Nim{0,1,1},\Nim{1,0,1},\Nim{1,1,0}}_{A}\equiv 0$ and finally $\pg{\Nim{1,1,1}}_A \equiv *$.

\medskip

In the rest of this section, we get interested in the values of Quantum single heaps of Nim under the different rulesets.

\subsection{Values of single Nim heaps}

\begin{lemma}\label{l:A-OneHeap}
In \A, the value of a superposition of Nim heaps has value \(*(k-1)\) if and only if its largest heap has size \(k\ge 1\), i.e.
\[\pg{Nim(i_1),\ldots, Nim(i_\ell)}_A \equiv *(k-1)\mbox{ where }k=\max_{1\le j\le\ell}(i_j)\]
\end{lemma}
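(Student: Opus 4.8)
The plan is to prove the two implications separately, using Theorem~\ref{th:sup-simplification} (or rather its Corollary~\ref{c:superposed-two} specialised to a single heap) to reduce to the simplest representative of the equivalence class, and then to compute the value of that representative by induction on $k$.

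First I would invoke the corollary to the effect that in \A\ (indeed in \A, \B, \D) a superposition of single Nim heaps is equivalent to the superposition of just its largest heap, i.e. $\pg{\Nim{i_1},\ldots,\Nim{i_\ell}}_A \equiv \pg{\Nim{k}}_A$ with $k=\max_j i_j$ (this is the displayed observation ``a Nim heap with $k$ tokens is covered by a Nim heap with more than $k$ tokens''). So it suffices to show $\pg{\Nim{k}}_A \equiv *(k-1)$ for every $k\ge 1$. I would then set up an induction on $k$. The base case $k=1$: from $\pg{\Nim{1}}_A$ the only classical move is $(1,-1)$, but in \A\ unsuperposed moves are forbidden and there is no second move to superpose it with, so there are no legal Q-moves; hence $\pg{\Nim{1}}_A$ has no options and its value is $0=*(1-1)$.

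For the inductive step, assume the claim for all heap sizes $<k$ (and more generally, via the corollary, for all single-heap superpositions whose maximum is $<k$), and analyse the legal Q-moves from $\pg{\Nim{k}}_A$. A Q-move is a superposition of at least two distinct classical moves $(1,-j)$, $j\in I$ with $|I|\ge 2$, $I\subseteq\{1,\ldots,k\}$; it produces the superposition $\pg{\Nim{k-j}\mid j\in I}_A$, whose value by the corollary and the induction hypothesis is $*(k-1-\min I)$ when $\min I\ge 1$ — note $k-\max I$ could be $0$ but the value is governed by $k-\min I$, the largest resulting heap. As $I$ ranges over all subsets of size $\ge 2$ of $\{1,\ldots,k\}$, $\min I$ ranges over $\{1,\ldots,k-1\}$, so the options realise exactly the values $*(k-2), *(k-3),\ldots, *1, *0$. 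The value of $\pg{\Nim{k}}_A$ is then $*x$ for the least $x$ not among the options, by the mex-rule; since $\{0,1,\ldots,k-2\}$ all appear and $k-1$ does not, we get $x=k-1$, i.e. $\pg{\Nim{k}}_A\equiv *(k-1)$, completing the induction.

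\textbf{Main obstacle.} The delicate point is making the reduction step rigorous: Corollary~\ref{c:superposed-two} as stated concerns Nim on at most $k$ heaps, and I am using the ``covering'' machinery (Theorem~\ref{th:sup-simplification}) on single-heap superpositions, together with the remark that a $k$-token heap is covered by any larger heap. One must check carefully that after an arbitrary superposed move the resulting position is again a single-heap superposition to which the covering observation applies, and that removing the non-maximal heaps is legitimate in \A\ even when that collapses the number of superposed positions — in particular that no ``degenerate'' position (an empty heap alongside a nonempty one) causes trouble, which is exactly why the statement is restricted to $k\ge1$ and why the base case is where \A\ differs from \C, \Cp. The rest is the routine mex computation.
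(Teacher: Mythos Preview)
Your proof is correct and follows essentially the same induction on the maximum heap size as the paper, with the same mex computation in the inductive step. The only organisational difference is that you first invoke Theorem~\ref{th:sup-simplification} to collapse the superposition to the canonical representative $\pg{\Nim{k}}_A$ and then appeal to it again inside the induction, whereas the paper takes the full statement (for arbitrary single-heap superpositions with maximum $k$) as the induction hypothesis, which makes the step self-contained and avoids any dependence on the covering theorem.
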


\begin{proof}
We prove the result by induction. If the largest heap is of size 1, then the game is \(\pg{\Nim{1}}\) or \(\pg{\Nim{0},\Nim{1}}\). No superposed move are possible, thus the game has value 0.

Assume now the lemma is verified up to \(k\). Then we claim a position with as largest heap \(k+1\) has value \(*k\). First, all the legal moves are of the form \((1,-x_i)\) for some \(1\le x_i \le k+1\). Denote $x=min(x_i)$. The move brings to a position whose largest heap is of size \(k-x+1\), which have value \(*(k-x)\) by induction. This proves the Lemma.
\end{proof}


\begin{lemma}
In \B, the value of a superposition of Nim heaps of size $k$ and less has value :
\[\pg{Nim(i_1),\ldots, Nim(i_\ell)}_B \equiv
\begin{cases}
0 \mbox{ if } k =2\\
* \mbox{ if } k =1\\
*(k-1) \mbox{ otherwise}
\end{cases} \mbox{ where }k=\max_{1\le j\le\ell}(i_j)\]
\end{lemma}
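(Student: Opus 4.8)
The plan is to mimic the inductive proof of Lemma~\ref{l:A-OneHeap}, but now tracking more carefully what happens near the bottom of the recursion, where Ruleset~\B{} differs from~\A{} by allowing a single unsuperposed move precisely when it is the only classical move available across all realisations. I would induct on $k=\max_j(i_j)$, and within a fixed $k$ argue about an arbitrary superposition $\pg{\Nim{i_1},\ldots,\Nim{i_\ell}}_B$ whose largest heap is $k$. By the \B{}-analogue of the covering argument (Theorem~\ref{th:sup-simplification} applies to~\B), such a superposition is equivalent to one in which the only heap is of size $k$, or—more conveniently—I can just say that its value depends only on $k$ and compute with the representative $\pg{\Nim{k}}_B$ together with whatever smaller heaps are forced to appear as options; in any case the reachable options are all of the form $\pg{\Nim{k-x}, \Nim{k-x'}, \ldots}_B$ for various removals, and by the covering corollary their values depend only on the largest remaining heap.

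The base cases are the crux of the new phenomenon. For $k=1$: the position is a superposition of heaps of size $0$ and $1$, at least one of size $1$. Here no superposition of two distinct moves is available (all legal classical moves are $(1,-1)$-type moves on size-$1$ heaps, which as classical moves coincide), but there \emph{is} exactly one legal classical move in all realisations together, so by the \B{}-exception the first player may play it unsuperposed, reaching an all-zero position; hence the value is $*$. For $k=2$: the position is a superposition with largest heap $2$. A first move either (i) is the unsuperposed move removing one token from a size-$2$ heap—\emph{not} legal under~\B{} here, since a size-$1$ move is also available somewhere, so there is more than one classical move; (ii) removes two tokens somewhere, or (iii) is a superposition, in every case landing in a position whose largest heap is $1$, i.e.\ a position of value $*$ by the $k=1$ case; or, if the superposition also involves a size-$1$ heap, we might reach an all-$*0$ position of value $0$ via a superposed move that empties all heaps. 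So the options from the $k=2$ position have values among $\{0,*\}$ and both are realised, giving mex-value $*2$... which is wrong; the correct statement is value $0$, so I need to check that the option of value $0$ is \emph{not} actually reachable, or that value $*$ is not reachable. The resolution: from $\pg{\Nim{2}}_B$ the only legal \B{}-move is the superposition $\{(1,-1),(1,-2)\}$, which yields $\pg{\Nim{1},\Nim{0}}_B$ of value $*$; from a position like $\pg{\Nim{2},\Nim{1}}_B$ one can additionally play a superposition hitting both heaps and reach $\pg{\Nim{0}}$ or $\pg{\Nim{1},\Nim{0}}$, i.e.\ values in $\{0,*\}$. So I must show that for $k=2$ the value is always $0$: this means showing that from \emph{every} $k=2$ superposition there is a move to value $0$, \emph{and} there is no move to a nonzero value other than... no: $\mathrm{mex}$ of a set containing $0$ and $*$ is $*2$. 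So actually the claim "$0$ if $k=2$" forces the option-set to be exactly $\{*\}$ (mex $=0$), meaning \emph{no} move to value $0$ exists. Hence the real content is: from any $k=2$ superposition under~\B, every legal move leads to a position of value $*$ (equivalently largest heap $1$), and in particular the "empty everything" superposed move is either unavailable or also lands on a value-$*$ position. I would verify this by checking that any superposed move on a superposition with max heap $2$ leaves at least one realisation with a nonempty heap of size $1$ (because to empty a size-$1$ heap you remove its single token, and the superposed partner move removing $1$ from a size-$2$ heap leaves a size-$1$ heap in that realisation), so the resulting superposition always has largest heap exactly~$1$, value~$*$. That is the step I expect to be the main obstacle: carefully ruling out any option of value~$0$ in the $k=2$ case.

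For the inductive step with $k\ge 3$: a position with largest heap $k$; under~\B{} no unsuperposed move is legal (since a $k$-heap admits both $(1,-1)$ and, say, $(1,-2)$ as distinct legal classical moves, so there is more than one classical move and the exception does not apply). Every legal move is a superposition $\{(h,-x)\}$; writing $x$ for the minimum token-count removed from the heap(s) of size $k$, Theorem~\ref{th:sup-simplification} reduces it to a single removal on a max heap, landing in a position with largest heap $k-x$ for some $1\le x\le k$, hence (by induction, using the three cases for the target) the options have values $\{*(k-x-1): 1\le x\le k-2\}\cup\{0,*\}$ from the $k-x\in\{2,1\}$ cases $=\{0,*,*2,\ldots,*(k-2)\}$, whose mex is $*(k-1)$. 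The only subtlety is that the "missing" value $*$ at target-heap-size $1$ and the value $0$ at target-heap-size $2$ do not create a gap: indeed the set $\{0,*,\ldots,*(k-2)\}$ is exactly $\{*0,\ldots,*(k-2)\}$, so its mex is $*(k-1)$, as claimed. I would close by noting that every value $*j$ for $0\le j\le k-2$ is genuinely attained (remove $k-j-1$ tokens from the max heap; for $*0$ and $*$ use the $k-x\in\{2,1\}$ targets), completing the induction.
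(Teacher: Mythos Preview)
Your proof is correct and follows the same inductive approach as the paper: base cases $k=1,2$ are computed directly, and for $k\ge 3$ every option is (via Theorem~\ref{th:sup-simplification}) a superposition with max heap in $\{1,\ldots,k-1\}$, yielding option-values $\{0,*,*2,\ldots,*(k-2)\}$ and mex $*(k-1)$. Your $k=2$ detour is unnecessary once you note, as the paper does in one line, that a single-heap superposition with max heap $2$ admits only the two classical moves $(1,-1)$ and $(1,-2)$, so under \B\ the sole legal move is their superposition, landing in $\pg{\Nim{0},\Nim{1}}_B$ of value~$*$; there are no ``two heaps'' here to hit, only two realisations of one heap.
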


\begin{proof}
We prove the result by induction. If the largest heap is of size 1, then the game is \(\pg{\Nim{1}}\) or \(\pg{\Nim{0},\Nim{1}}\). There is a single legal move, which is the unsuperposed move $(1,-1)$ that lead to \(\pg{\Nim{0}}\). Thus the game as value \(*\).
If the largest heap is of size 2, then only the superposed move \(((1,-1),(1,-2))\) is legal, which brings to the game \(\pg{\Nim{0},\Nim{1}}\) of value \(*\). Thus it has value 0.

Assume now the lemma is verified up to \(k\ge 2\). Then we claim a position whose largest heap has size \(k+1\) has value \(*k\). First, all the legal moves are superpositions of the form \((1,-i)_{i\in I}\) with $I\subset [1,k+1]$.  
The move brings to a position whose largest heap is of size \(k-j+1\) where $j=\min \{i\in I\}$, which has value \(*(k-i)\) if \(k-i\ge 2\), $0$ if \(k-i= 1\) and $*$ if \(k-i= 0\), by induction. This proves the Lemma.
\end{proof}


\begin{lemma}

In \D, the value of a superposition of Nim heaps has value \(*k\) if and only if its largest heap has size \(k\), i.e.
\[\pg{Nim(i_1),\ldots, Nim(i_\ell)}_D \equiv *k\mbox{ where }k=\max_{1\le j\le\ell}(i_j)\]
\end{lemma}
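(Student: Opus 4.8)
The plan is to proceed exactly as in Lemma~\ref{l:A-OneHeap}, by induction on the size $k$ of the largest heap. The statement to prove is that $\pg{\Nim{i_1},\ldots,\Nim{i_\ell}}_D \equiv *k$ where $k = \max_j i_j$. For the base case $k=0$, the superposition consists only of empty heaps $\pg{\Nim{0},\ldots,\Nim{0}}_D = \pg{\Nim{0}}_D$, on which no move is legal, so the value is $0 = *0$, as desired.

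For the inductive step, assume the claim holds for all superpositions whose largest heap has size at most $k$, and consider a superposition $\pg{\Nim{i_1},\ldots,\Nim{i_\ell}}_D$ with largest heap size $k+1$. In ruleset \D, every legal Q-move is a superposition of classical moves of the form $(1,-x_i)$ with $1 \le x_i \le k+1$ (recall there is a single heap position per realisation, so the heap index is fixed), and \emph{additionally} unsuperposed moves $(1,-x)$ are allowed for any $1 \le x \le k+1$. Playing a Q-move with set of removals $\{x_i\}$ and writing $x = \min_i x_i$, the resulting superposition has largest heap size $(k+1) - x$ — either because one of the original heaps of size $k+1$ has $x$ tokens removed, or, if no realisation of size $k+1$ survives the minimal removal together with the others, the argument of Theorem~\ref{th:sup-simplification} (covering) reduces the move to its minimal component anyway. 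By the induction hypothesis this resulting position has value $*((k+1)-x)$. As $x$ ranges over $1,\ldots,k+1$ (achievable in particular by the unsuperposed move $(1,-x)$, legal in \D), the set of option values is exactly $\{*0, *1, \ldots, *k\}$, so by the mex-rule the value is $*(k+1)$. This completes the induction.

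The conceptual point, and the only place one must be slightly careful, is the claim that in \D\ \emph{all} values $*0,\ldots,*k$ are realised as options and no larger one is: the upper bound $*((k+1)-x) \le *k$ holds because $x \ge 1$ always, and the lower coverage holds because the unsuperposed moves $(1,-x)$ are freely available in \D. This is exactly the feature that distinguishes \D\ from \A\ and \B: in \A\ the move removing exactly one token is forbidden as an unsuperposed move, which is why $\pg{\Nim{k+1}}_A$ only reaches $*0,\ldots,*(k-1)$ and has value $*(k-1)$ rather than $*k$. One should also note that adding smaller heaps to the superposition never increases the maximal option value (their options are covered, by the observation preceding Theorem~\ref{th:sup-simplification}), so the value depends only on the largest heap, as stated.

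I do not expect any genuine obstacle here; the only thing to get right is the bookkeeping on which heap sizes appear after a Q-move, and invoking the covering theorem to handle the case where the minimal-removal realisation on a largest heap is not the unique survivor. Everything else is the mex-rule applied to the interval $\{*0,\ldots,*k\}$.
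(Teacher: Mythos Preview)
Your proof is correct and follows the same inductive approach as the paper: show that every option has largest heap at most $k$ (hence value at most $*(k-1)$ by induction) and that the unsuperposed moves, always legal in \D, realise each value $*0,\ldots,*(k-1)$. The appeal to Theorem~\ref{th:sup-simplification} is harmless but unnecessary here, since the largest heap after a Q-move with removal set $\{x_i\}$ is simply $(k+1)-\min_i x_i$, no covering argument required.
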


\begin{proof}
Again the proof works by induction. In \D, a superposition with largest heap of size \(k\) has value at least \(*k\) since all unsuperposed moves are legal, and all its options have value at most \(*(k-1)\) by induction. 
\end{proof}

In contrast with what happens in the previous rulesets, the value of a superposition of Nim Heaps in \C\ does not depend only on the size of the maximum heap, as shows the following lemma.

\begin{lemma}
\label{l-C}
In \C, a superposition of Nim heaps with maximum size $k$ has value \(*(k-1)\), unless there is only one element in the superposition in which case the game has value \(*k\). i.e.
\[\pg{Nim(i_1),\ldots, Nim(i_\ell)}_C \equiv
\begin{cases}
*k \mbox{ if } \ell =1\\
*(k-1) \mbox{ otherwise}
\end{cases} \mbox{ where }k=\max_{1\le j\le\ell}(i_j)\]
\end{lemma}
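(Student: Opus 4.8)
The plan is to prove this by induction on the birthday of the superposition, i.e., on $k = \max_j(i_j)$, carefully distinguishing the singleton case from the genuine superposition case. For a single heap $\pg{\Nim{k}}_C$, the ruleset C allows all unsuperposed moves (each is valid in the unique realisation), so this behaves exactly like the classical Nim heap, giving value $*k$; this handles the $\ell=1$ claim directly, and matches the base case $k=0$ or $k=1$ for singletons. For the genuine superposition case, I would set up the induction hypothesis that any superposition with maximum heap size strictly less than $k$ has the stated value, and that any superposition with maximum heap size exactly $k$ but with $\ell\ge 2$ elements has value $*(k-1)$.

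The core of the argument is to analyze the options of a genuine superposition $S = \pg{\Nim{i_1},\ldots,\Nim{i_\ell}}_C$ with $\ell\ge 2$ and $\max = k$. First I would argue that $*(k-1)$ is \emph{not} among the options: any move (superposed or not) that is played removes at least one token from some realisation with $k$ tokens, and one needs to check that no legal move leaves a superposition whose value is $*(k-1)$. An unsuperposed move $(1,-x)$ is legal only if it is valid in \emph{all} realisations, i.e. only if $x \le \min_j i_j$; applying it yields $\pg{\Nim{i_1-x},\ldots,\Nim{i_\ell-x}}$, still with $\ell$ elements (some possibly equal, but as a set it may collapse). A superposed move $(1,-i)_{i\in I}$ with $|I|\ge 2$ yields, after discarding the $\NULL$ results, a superposition whose maximum heap is $k - \min(I)$; by induction this has value $*(k-\min(I)-1)$ if the result is a genuine superposition, or $*(k-\min(I))$ if it collapses to a singleton. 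The key point to nail down is exactly when the result collapses to a singleton and whether $*(k-1)$ can be reached: since any move removes at least one token from the top, the only way to reach value $*(k-1)$ would be via a move with $\min = 1$ landing on a singleton, and I would show that is impossible here because $\ell\ge 2$ guarantees at least two distinct realisations survive any legal move (or the surviving set still has maximum $k-1$ and size $\ge 2$, giving value $*(k-2)$ by induction).

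Next I would show that every value $*j$ with $0 \le j \le k-2$ \emph{is} among the options, so that by the mex-rule the value is exactly $*(k-1)$. For $j$ in this range, I would exhibit an explicit move: e.g. the superposed move removing $k-1-j$ and $k-j$ tokens from heap $1$ (legal since both are valid in the realisation with $k$ tokens, as $k-j \le k$), producing a superposition whose maximum heap has size $j+1 \ge 2$ when $j\ge 1$, hence value $*j$ by induction, and when $j = 0$ landing on $\pg{\Nim{0},\Nim{1}}$ which has value $0$ (also by the $k=1$, $\ell\ge 2$ case). One must double-check these target superpositions are genuine (size $\ge 2$) so the induction hypothesis gives $*(k-1-?)$ correctly rather than the singleton value; choosing the move to act only on one heap and keep the realisations distinct ensures this.

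I expect the main obstacle to be the bookkeeping around collapse to singletons: a superposition $\pg{\Nim{i_1},\ldots,\Nim{i_\ell}}$ with $\ell \ge 2$ listed elements can, after a move, become a set with a single distinct position (e.g. $\pg{\Nim{1},\Nim{2}}$ under the forced superposed move becomes $\pg{\Nim{0},\Nim{1}}$, which is genuine, but $\pg{\Nim{1},\Nim{1}}$-type situations do collapse), and the value then jumps from the $*(k-1)$ regime to the $*k$ regime. I would handle this by being careful that "$\ell\ge 2$" in the statement means "at least two \emph{distinct} classical positions in the set", consistent with the set notation $\pg{\cdot}$ used throughout, and by checking in each sub-case of the option analysis whether distinctness is preserved. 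A secondary subtlety is the interaction with Corollary~\ref{c:superposed-two}'s analogue — here we are in C, not A/B/D, so Theorem~\ref{th:sup-simplification} does \emph{not} apply, and covered elements cannot be discarded; this is precisely why the value differs from the A-value, and the proof must not accidentally invoke that simplification.
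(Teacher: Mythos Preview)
Your proposal is correct and follows essentially the same inductive approach as the paper: show that from a genuine superposition ($\ell\ge 2$) every option is again a genuine superposition of maximum size $k'\le k-1$, hence of value $*(k'-1)$ by induction, and then apply the mex rule. The non-collapse fact you flag as the main obstacle is exactly what the paper asserts (without justification) as ``all moves bring to a superposed game''; the one-line reason is that every classical move occurring in a legal Q-move removes at most $k$ tokens and so is legal on the realisation $\Nim{k}$, whence any two distinct such moves applied to $\Nim{k}$ already yield two distinct images.
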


\begin{proof}
Again the proof is by induction on \(k=\max_{1\le j\le\ell}(i_j)\).
If $\Nim{0}$ belong to the superposition (i.e. \(min_{1\le j\le \ell}(i_j)=0\)), no unsuperposed move is legal, then the options are all superposed. 
Also if the game is superposed with something else, all moves bring to a superposed game. In both cases the options are superposition of at least two Nim heaps of any maximum size $k'$ with  $0\le k' \le k-1$, which have value \(*(k'-1)\), so the game has value $*(k-1)$. 

Now if there is only one element in the superposition, all unsuperposed moves are legal, and the values of the options span the interval \([0,k-1]\).
\end{proof}


\begin{lemma} 
\label{l-Cp}
In \Cp, a superposition of Nim heaps with maximum size $k$ 
has value $*k$ unless $\Nim{k-1}$ belongs to the superposition. In that case, the game has value $*(k -1)$, with the only exceptions of $\pg{\Nim{0},\Nim{1}}_{C'}$ which has value $*$ and $\pg{\Nim{1},\Nim{2}}_{C'}$ which has value 0.


\[G=\pg{Nim(i_1),\ldots, Nim(i_\ell)}_{C'} \equiv
\begin{cases}
0  \mbox{ if } G = \pg{\Nim{1},\Nim{2}}_{C'} \\  \  \mbox{ or } \pg{\Nim{0},\Nim{1},\Nim{2}}_{C'} \\  
* \mbox{ if } k= 1 \\
*(k-1)  \mbox{ if } k-1 \in \{i_j\}\\
*k  \mbox{ otherwise}
\end{cases} \mbox{ where }k=\max_{1\le j\le\ell}(i_j)\]
\end{lemma}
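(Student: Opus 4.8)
The plan is to prove this by induction on $k=\max_j(i_j)$, following the pattern established in the proofs of Lemmas~\ref{l:A-OneHeap}--\ref{l-C}, but with more careful case analysis because \Cp\ has the subtlest rule: an unsuperposed move $m$ is legal iff it is valid in every realisation where the player still has at least one classical move. In a superposition of single Nim heaps $\pg{\Nim{i_1},\ldots,\Nim{i_\ell}}$, an unsuperposed move $(1,-x)$ is legal iff $x\le i_j$ for every $i_j$ with $i_j\ge 1$ (the realisations with $i_j=0$ are exactly those in which the player has no classical move, so they are ignored). In particular, the unsuperposed move removing $x=\min\{i_j : i_j\ge 1\}$ tokens is always legal in \Cp, and removing $x$ tokens from the smallest nonzero heap leaves a superposition whose maximum is $k-x$ and which still contains $\Nim{k-x}$ unless... — this is the book-keeping one has to do.

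First I would handle the base cases $k=0$ (value $0$) and $k=1$: here the superposition is $\pg{\Nim{1}}$ or $\pg{\Nim{0},\Nim{1}}$, and in both the only legal move is the unsuperposed $(1,-1)$ (legal in the second case precisely because the $\Nim{0}$ realisation is ignored), giving value $*$. Next the two exceptional cases: for $\pg{\Nim{1},\Nim{2}}_{C'}$ and $\pg{\Nim{0},\Nim{1},\Nim{2}}_{C'}$ I would enumerate options directly — the unsuperposed move $(1,-1)$ is legal (it is valid in the $\Nim{1}$ and $\Nim{2}$ realisations, the only ones that matter) and leads to $\pg{\Nim{0},\Nim{1}}_{C'}$ of value $*$; the superposed move $((1,-1),(1,-2))$ leads to $\pg{\Nim{0},\Nim{1}}_{C'}$ of value $*$ as well; so the only option value present is $*$, giving value $0$. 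One should also check no option has value $0$, confirming $*0=0$ by the mex rule.

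For the inductive step with $k\ge 2$ (and the position not one of the two exceptions), I would split on whether $\Nim{k-1}$ belongs to the superposition. If $k-1\notin\{i_j\}$, I must show the value is $*k$: the unsuperposed moves $(1,-1),\ldots,(1,-(k-\min^+))$ are all legal where $\min^+=\min\{i_j:i_j\ge1\}$, wait — more carefully, an unsuperposed move removing $x$ tokens is legal iff $x\le\min^+$, so unsuperposed moves reach all maxima $k-1,k-2,\ldots,k-\min^+$; additionally, a superposed move can reach every maximum from $0$ up to $k-1$, landing on a superposition of $\ge 2$ heaps. Using the induction hypothesis, the set of option values must be shown to be exactly $\{0,1,\ldots,k-1\}$ (as $*$-values), and crucially $*k$ must \emph{not} appear among the options — no move can keep the maximum at $k$ while the position is not the original — so the mex is $*k$. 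The trickier direction is $k-1\in\{i_j\}$, where the claim is value $*(k-1)$: here I need (a) $*(k-1)$ does not appear as an option value, and (b) all of $0,\ldots,k-2$ do. For (a), any move that leaves a realisation with maximum $k-1$ must either be the unsuperposed $(1,-1)$ (legal only if $1\le\min^+$, i.e. $\Nim{0}\notin$ superposition) landing on a superposition that still contains $\Nim{k-2}$, hence has value $*(k-2)$ by induction — not $*(k-1)$ — unless that landing position is an exception or has $\ell=1$; or it is a superposed move, landing on $\ge2$ heaps with maximum $k-1$, value $*(k-2)$ by induction. The delicate sub-case is when removing one token from everything collapses the superposition to a single heap $\Nim{k-1}$ (value $*(k-1)$!): this happens exactly when every $i_j\in\{0,1\}$ except one equal to... no, when after removing $1$ we get a single nonzero heap, i.e. the superposition is $\pg{\Nim{0},\Nim{k}}$ with $k-1$ not present unless $k=2$; so for $k-1\in\{i_j\}$ this collapse lands on $\ge2$ heaps and is fine except precisely in the exceptional configurations already carved out. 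I expect this collapse-to-single-heap analysis, together with correctly tracking when the \Cp-specific "ignore realisations with no move" clause changes which unsuperposed moves are legal, to be the main obstacle; everything else is the same mex computation as in Lemma~\ref{l-C}.
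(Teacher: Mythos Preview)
Your overall strategy matches the paper's: induction on $k$, the base cases $k\le 2$ treated by hand (you should also include $\pg{\Nim{2}}$ and $\pg{\Nim{0},\Nim{2}}$, as the paper does), and the inductive step split on whether $\Nim{k-1}$ lies in the superposition. Your treatment of part~(a) in the case $k-1\in\{i_j\}$ is more convoluted than needed: the paper simply observes that if $j$ is the smallest removal amount occurring in any legal move, then the resulting superposition contains both $\Nim{k-j}$ and $\Nim{k-j-1}$ (coming from $\Nim{k}$ and $\Nim{k-1}$ respectively), so the new maximum $k'=k-j$ always has $\Nim{k'-1}$ present and hence value at most $*(k'-1)\le *(k-2)$. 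This one line replaces your collapse discussion---which is also marred by the incorrect parenthetical that $(1,-1)$ requires $\Nim{0}\notin$ superposition; in \Cp\ the realisation $\Nim{0}$ is precisely the one that is ignored, so $(1,-1)$ is always legal once some heap is nonzero.

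There is a genuine gap in the case $k-1\notin\{i_j\}$. You say superposed moves reach every maximum in $\{0,\ldots,k-1\}$ and conclude that the option values are $\{0,*,\ldots,*(k-1)\}$, but this does not follow. A superposed move can never give maximum $0$ (its minimum removal amount is at most $k-1$), and by induction a superposition of at least two heaps with maximum $k'$ has value $*k'$ or $*(k'-1)$ \emph{or is one of the exceptions}; controlling only the maximum is therefore not enough. In particular, to obtain an option of value $0$ you must land exactly on one of the exceptional positions. The paper does this with the three-element move $\bigl((1,-(k-2)),(1,-(k-1)),(1,-k)\bigr)$, which yields $\pg{\Nim{0},\Nim{1},\Nim{2}}$; likewise $\bigl((1,-(k-1)),(1,-k)\bigr)$ gives $\pg{\Nim{0},\Nim{1}}$ of value~$*$, and $\bigl((1,-j),(1,-k)\bigr)$ for $1\le j\le k-2$ gives a position with maximum $k-j$ \emph{not} containing $\Nim{k-j-1}$, hence value $*(k-j)$.
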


\begin{proof}
We first consider separately the games with a largest heap of size less than $3$.
Observe first that in \Cp, the games \(\pg{\Nim{1}}\) and \(\pg{\Nim{0},\Nim{1}}\) have only $(1,-1)$ as a possible move, which leads to an ended game. Thus they have value $*$. Now in both \(\pg{\Nim{1},\Nim{2}}\) and  \(\pg{\Nim{0},\Nim{1},\Nim{2}}\), the options are \((1,-1)\) and \(((1,-1),(1,-2))\). Both lead to the game \(\pg{\Nim{0},\Nim{1}}\) which has value $*$, so the initial game has value $0$. 
Finally, the games $\pg{\Nim{2}}$ and $\pg{\Nim{0},\Nim{2}}$ have options \(\pg{\Nim{0}}\), \( \pg{\Nim{1}}\), and \(\pg{\Nim{0,1}}\) and thus have value $*2$.

Consider now the other games, which have a largest heap in the superposition of size $k\ge 3$. We prove the lemma by induction on $k$. Suppose first that $\Nim{k-1}$ belongs to the superposition. Then the move \(((1,-j),(1,-j-1))\) leads to a superposition of largest heap $\Nim{k-j}$ that also contains $\Nim{k-j-1}$, thus of value $*(k-j-1)$ if $1\le j\le k-3$, $0$ if $j=k-2$, and $*$ if $j=k-1$. Moreover, all moves bring to a game with largest heap of size $k'$ with $k'\le k-1$ and that contain $\Nim{k'-1}$. 

Assume now that $\Nim{k-1}$ does not belong to the superposition. Then the move \(((1,-j),(1,-k))\) (for $1\le j \le k-2$) leads to a superposition of largest heap $\Nim{k-j}$ that does not contain $\Nim{k-j-1}$, so of value $*(k-j)$ by induction. 
With \(((1,-k),(1,-k+1))\), we reach the game \(\pg{\Nim{0},\Nim{1}}\) which has value $*$.
With the move  \(((1,-k),(1,-k+1),(1,-k+2))\), we reach the game \(\pg{\Nim{0},\Nim{1},\Nim{2}}\) which has value $0$. All moves reach a game with value at most $*(k-1)$ since the largest heap of the resulting game is of size at most $k-1$. This concludes the proof.
\end{proof}





\section{Conclusion}

\subsection{Quantum variations of games with more than one heap.}

In the above, we restricted ourselves to the study of games with a single Nim heap, as an initial approach. The next step is naturally to consider games where there are two heaps or more. Such a game is fundamentally different from the sum of two games on one heap, since it becomes possible to play a superposition of moves on different heaps. We then have to consider some extra cases like in the following lemma.

\begin{lemma}
In \A, for $i,j>0$,
 \(\pg{\Nim{i,0},\Nim{0,j}}_A \equiv *(\max(i,j) -1+\delta_{i,j})\) where $\delta_{i,j}$ denotes the Kronecker delta.
\end{lemma}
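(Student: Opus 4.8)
The plan is to prove this by induction on $i+j$, computing the options of $G = \pg{\Nim{i,0},\Nim{0,j}}_A$ and applying the mex-rule. The position has two heaps, but each realisation currently has exactly one non-empty heap (heap~1 with $i$ tokens, or heap~2 with $j$ tokens). A legal Q-move in \A\ is a superposition of at least two classical moves; the available classical moves are $(1,-a)$ with $1\le a\le i$ (legal on the first realisation) and $(2,-b)$ with $1\le b\le j$ (legal on the second). I would split the analysis of options into three families according to which kinds of classical moves appear in the superposition: (i) moves touching only heap~1, (ii) moves touching only heap~2, (iii) moves touching both heaps.

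For family (i), a superposition $\{(1,-a_1),\dots,(1,-a_r)\}$ with $r\ge 2$ applied to $G$ annihilates the realisation $\Nim{0,j}$ (no $(1,-a)$ is legal there) and yields a superposition of positions $\Nim{i-a_s,0}$; by Corollary~\ref{c:superposed-two} and Lemma~\ref{l:A-OneHeap} (applied to the first coordinate, which is just a single-heap game), the best such option is $*(i-1-1) = *(i-2)$ obtainable when $i\ge 2$ and the smallest $a_s$ equals $1$ with at least two values present, or $*(i-2)$ in general — more carefully, the reachable values here are $*0,\dots,*(i-2)$. Symmetrically family (ii) contributes values $*0,\dots,*(j-2)$. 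For family (iii), a superposition containing at least one $(1,-a)$ and at least one $(2,-b)$ keeps \emph{both} realisations alive, producing a position of the form $\pg{\Nim{i-a,0}\text{'s},\Nim{0,j-b}\text{'s}}$; choosing single representatives $a,b$ (again justified by the covering/simplification results) this is $\pg{\Nim{i-a,0},\Nim{0,j-b}}_A$ with $0\le a\le i-1$... wait, $1\le a\le i$, so $0\le i-a\le i-1$, similarly $0\le j-b\le j-1$, and I must handle the subcases where $i-a=0$ or $j-b=0$ separately (then the position collapses to a single-heap superposition covered by the earlier lemmas), and where both are positive (then it is a smaller instance of the present lemma, so induction applies and its value is $*(\max(i-a,j-b)-1+\delta_{i-a,j-b})$).

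The crux is then to show that the set of option-values is exactly $\{*0,*1,\dots,*(v-1)\}$ where $v = \max(i,j)-1+\delta_{i,j}$, so that the mex is $*v$. The easy direction is that no option reaches value $\ge *v$: family (i) caps at $*(i-2)\le *(v-1)$, family (ii) at $*(j-2)\le *(v-1)$, and for family (iii) with both coordinates positive the inductive value $\max(i-a,j-b)-1+\delta_{i-a,j-b}$ must be bounded by $v-1$ — this is the inequality requiring a small case check, the main obstacle, because the Kronecker-delta bump means one must verify e.g. that $i-a=j-b$ forces $\max(i-a,j-b)\le \max(i,j)-1$ strictly enough, and the boundary cases $i-a=0$ or $j-b=0$ give single-heap values $*(i-a-1)$-type terms that also need to stay below $*v$. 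The other direction, that every value $*0,\dots,*(v-1)$ is attained, I would get by exhibiting explicit moves: when $i\ne j$, say $i>j$, family (i) moves already realise $*0,\dots,*(i-2)=*(v-1)$; when $i=j$, I need value up to $*(v-1)=*(i-1)$, which family~(i) alone only reaches up to $*(i-2)$, so the extra value $*(i-1)$ comes from a family-(iii) move, e.g.\ the diagonal move $\{(1,-1),(2,-1)\}$ landing on $\pg{\Nim{i-1,0},\Nim{0,i-1}}_A$ which by induction has value $*(\,(i-1)-1+\delta_{i-1,i-1}\,)=*(i-1)$, together with moves $\{(1,-1),(2,-b)\}$ for varying $b$ filling in the intermediate values. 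I would also need to treat the base cases $i=j=1$ (no Q-move legal, value $0 = *(\,1-1+1\,)$... that gives $*1=*$, so let me recheck: $\max(1,1)-1+1 = 1$, but $\pg{\Nim{1,0},\Nim{0,1}}_A$ has no legal superposed move since each realisation has a unique move and they differ — actually $\{(1,-1),(2,-1)\}$ is legal, landing on $\pg{\Nim{0,0}}$, so value is $\mathrm{mex}\{0\}=*$, consistent) and small cases with one of $i,j$ equal to $1$, carefully, before running the induction.
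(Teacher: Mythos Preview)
Your proposal is correct and follows essentially the same route as the paper's proof: induction, the three-family split (moves on heap~1 only, heap~2 only, mixed), reduction of each family via Theorem~\ref{th:sup-simplification} and Lemma~\ref{l:A-OneHeap}, the upper bound on option values, and the explicit moves realising all smaller values (with the diagonal move $\{(1,-1),(2,-1)\}$ supplying the extra value $*(i-1)$ when $i=j$). The only cosmetic difference is that the paper silently extends the induction hypothesis to allow $i\ge 0$ (noting the base case $i=0$, $j=1$), which absorbs your separate treatment of the boundary subcases $i-a=0$ or $j-b=0$ into the inductive step; your way of handling those via the single-heap lemma is equally valid.
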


\begin{proof}
Let $S=\pg{\Nim{i,0},\Nim{0,j}}$, with  $i\ge 0,\, j> 0$. Observe first that if \(i=0, \, j=1\), then there is no possible  move and the game has value 0.

We now consider the possible moves on $S$. First observe that every possible move brings to a game that is equivalent to a superposition of at most two games. Indeed, a superposition of classical moves on the first heap 
\(((1,i_1),\ldots (1,-i_k))\) 
brings to a superposition \(\pg{\Nim{i-i_1,0},\ldots \Nim{i-i_k,0}}\), which is equivalent to
\(\pg{\Nim{i-\min_l \{i_l\},0}}\) 
by Theorem~\ref{th:sup-simplification}. This option has value $*(i-\min_l \{i_l\} -1)$ by Lemma \ref{l:A-OneHeap}, which is at most $*(i-2)$.
Similarly, moves \(((2,-j_1),\ldots (2,-j_k))\) bring to the position  \(\pg{\Nim{0,j-\min_l \{j_l\}}}\) with value at most $*(j-2)$. 
Finally, moves of the form \(((1,-i_1),\ldots (1,-i_l), (2,-j_1),\ldots (2-j_{k'})\) 
lead to the superposition \(\pg{\Nim{i-\min_l \{ i_l\},0},\Nim{0,j-\min_l \{j_l\}}}\).

By induction, these positions have value at most $*(\max\{i,j\}-1)$, and the only situation when this maximum value is attained is when $i=j$, and the move played is \(((1,-1),(2,-1))\). This prove that the value of \(\pg{\Nim{i,0},\Nim{0,j}}_A\) is at most \(*(\max(i,j) -1+\delta_{i,j})\). 

To show the other inequality, we simply need to observe that all the values $*k$ for $k\le j-2$ are reached by the moves $((2,-j+k),(2,-j+k+1))$, and similarly for $i$ on the first heap. 
\end{proof}

\subsection{Discussion on the different rulesets.}

Among the proposed rulesets, we think that \C\ is interesting for its physical interpretation: when the position is classical we can use classical moves but when applying a superposed move we put the system in a superposition and we can no-longer act classically: one cannot force a branch of a superposition by choosing a classical move.

On the other hand, the Combinatorial Game Theorist might find the \A\ more interesting to study, since it diverges most from the classical game. 

\smallskip

Another natural restriction for simplifying the game study would be to limit moves allowed to superpositions of at most two moves. This was the choice made by the Quantum Tic-Tac-Toe that can be played online (e.g. at \url{http://countergram.com/qtic/}) or by the \emph{Quantum tic tac toe}-apps for smartphones.  
By Corollary~\ref{c:superposed-two}, this restriction does not change anything in \A, B or D on Nim games on at most two heaps. Lemmas~\ref{l-C} and~\ref{l-Cp} (for \C\ and C') also hold under this restriction.


With this restriction, we computed with CGSuite \cite{CGSuite} 
the values of the games \(\pg{\Nim{i,j}}\) for small values of $i$ and $j$, they are given in the Appendix, for each of the rulesets. Observe in particular the case of the game $\pg{\Nim{3,3}}_A$ that has value 4, which cannot be obtained by a combination of sums of heaps on at most three tokens. 


\section*{Acknowledgments}
This research was supported in part by the ANR-14-CE25-0006 project of the French National Research Agency and by PEPS INS2I 2016 JCQ.

The authors would also like to thank Simon Perdrix and Urban Larsson for fruitful discussions.

\providecommand{\urlalt}[2]{\href{#1}{#2}}
\providecommand{\doi}[1]{doi:\urlalt{http://dx.doi.org/#1}{#1}}

\newpage

\section*{Appendix}




We first give the values of small games on two heaps computed with CGSuite in the case when superposition of only two moves are allowed. The tables are available for each ruleset.


\begin{table}[h]
\centering
\begin{tabular}{c|cccccccccccc}
  & 0 & 1 & 2 & 3 & 4 & 5 & 6 & 7 & 8 & 9 & 10&11\\
\hline
0 & 0 & 0 & 1 & 2 & 3 & 4 & 5 & 6 & 7 & 8 & 9 &10\\
1 & 0 & 0 & \green{3} & 1 & 2 & 5 & 4 & 7 & 6 & 9 & 8 & 11\\
2 & 1 & \green{3} & 1 & 0 & \green{6} & 2 & 7 & 4 & 9 & 10 &5& 8\\
3 & 2 & 1 & 0 & \red{4} & \green{7} & 3 & 2 & 8 &5&6 & \blue{13} & \blue{14}\\
4 & 3 & 2 & \green{6} & \green{7} & 0 & \green{1} & 3 & \blue{11} & \blue{12} & 4 & \blue{14} & 9 \\
5 & 4 & 5 & 2 & 3 & \green{1} & 0 & \blue{11} & \blue{12} & \blue{13} &\blue{14}&4&\blue{16}\\
6 & 5 & 4 & 7 & 2& 3&\blue{11}&\blue{12}&\blue{13}&0&\blue{15}\\ 
\end{tabular}
\smallskip
\begin{center} Values of \(\pg{\Nim{i,j}}_A\) \end {center}
\label{tab:A}
\end{table}

\begin{multicols}{2}

\centering
\begin{tabular}{c|ccccccccccc}
  & 0&1 & 2 & 3 & 4 & 5 & 6 &7&8 &9 &10\\
\hline
0 &  0 &1&0 &2&3&4&5&6&7&8&9\\
1 &1& 0 & 1 & 4 & 2 & 3 & 6 &5&8\\
2 &0&1& 0&5&6&2&3&8\\
3 &2&4&5&1&7&6&0\\
4 &3&2&6&7&3&4 \\
5 &4&3&2&6&4&0
\end{tabular}

\smallskip 

{Values of \(\pg{\Nim{i,j}}_B\)}

\bigskip


\centering
\begin{tabular}{c|ccccccccccc}
  & 0&1 & 2 & 3 & 4 & 5 & 6 &7 \\
\hline
0 &  0 &1&2&3&4&5&6&7\\
1 & 1&0&3&4&5&6&7&8\\
2 &2&3&1&5&6&7&8&9 \\
3 &3&4&5&0&7&8&2&10\\
4 &4&5&6&7&0&9&10 \\
5&5&6&7&8&9&0 \\
6&6&7&8&2&10\\
7&7&8&9&10
\end{tabular}

\smallskip

{Values of \(\pg{\Nim{i,j}}_C\)}

\columnbreak


\centering

\begin{tabular}{c|ccccccccccc}
  & 0&1 & 2 & 3 & 4 & 5 & 6 &7&8 \\
\hline
0 &  0 &1&2&3&4&5&6&7&8\\
1 & 1&0&3&4&2&6&7&5&9\\
2 &2&3&0&5&6&1& 8\\
3 &3&4&5&1&7&0\\
4 &4&2&6&7&3
\end{tabular}

\smallskip

{Values of \(\pg{\Nim{i,j}}_{C'}\)}

\bigskip


\centering

\begin{tabular}{c|ccccccccccc}
  & 0&1 & 2 & 3 & 4 & 5 & 6 &7 \\
\hline
0 &  0 &1&2&3&4&5&6&7\\
1 &1&0&3&2&5&4&7&6 \\
2 &2&3&1&0&6&7&4 \\
3 &3&2&0&4&1&6\\
4 &4&5&6&1&0&2 
\end{tabular}

\smallskip

{Values of \(\pg{\Nim{i,j}}_D\)}
\end{multicols}


\end{document}